\newcommand{\subparagraph}{}
\titlespacing*{\section}{15pt}{1\baselineskip}{0.9\baselineskip}
\newcommand{\myhash}{%
  {\settoheight{\dimen0}{C}\kern-.05em\, \resizebox{!}{\dimen0}{\raisebox{\depth}{\#}}}}
\def\mindex#1{\index{#1}}
\def\sq{\hbox{\rlap{$\sqcap$}$\sqcup$}}
\def\qed{\ifmmode\sq\else{\unskip\nobreak\hfil
\penalty50\hskip1em\null\nobreak\hfil\sq
\parfillskip=0pt\finalhyphendemerits=0\endgraf}\fi\medskip}
\long\def\defbox#1{\framebox[.9\hsize][c]{\parbox{.85\hsize}{%
\parindent=0pt
\baselineskip=12pt plus .1pt      
\parskip=6pt plus 1.5pt minus 1pt 
 #1}}}
\long\def\beginbox#1\endbox{\subsection*{}%
\hbox{\hspace{.05\hsize}\defbox{\medskip#1\bigskip}}%
\subsection*{}}
\def\endbox{}
\newsavebox{\junk}
\savebox{\junk}[1.6mm]{\hbox{$|\!|\!|$}}
\def\liminf{\mathop{\rm lim\ inf}}
\def\bC{{\mathbb C}}
\def\bE{{\mathbb E}}
\def\bH{{\mathbb H}}
\def\bZ{{\mathbb Z}}
\def\bbh{{\mathbb h}}
\def\bfA{{\bf A}}
\def\bfH{{\bf H}}
\def\bfI{{\bf I}}
\def\bfa{{\bf a}}
\def\bfc{{\bf c}}
\def\bfg{{\bf g}}
\def\bfh{{\bf h}}
\def\bfq{{\bf q}}
\def\bfu{{\bf u}}
\def\bfx{{\bf x}}
\def\bfy{{\bf y}}
\def\bfz{{\bf z}}
\def\sfH{{\sf H}}
\def\bfmath#1{{\mathchoice{\mbox{\boldmath$#1$}}%
{\mbox{\boldmath$#1$}}%
{\mbox{\boldmath$\scriptstyle#1$}}%
{\mbox{\boldmath$\scriptscriptstyle#1$}}}}
\def\bfmY{\bfmath{Y}}
\def\bfmhhaY{\bfmath{\hhaY}} 
\def\bfmhhaY{\hbox to 0pt{$\widehat{\bfmY}$\hss}\widehat{\phantom{\raise 1.25pt\hbox{$\bfmY$}}}}
\def\til={{\widetilde =}}
 \def\FRAC#1#2#3{\genfrac{}{}{}{#1}{#2}{#3}}
\def\ddtp{{\mathchoice{\FRAC{1}{d^{\hbox to 2pt{\rm\tiny +\hss}}}{dt}}%
{\FRAC{1}{d^{\hbox to 2pt{\rm\tiny +\hss}}}{dt}}%
{\FRAC{3}{d^{\hbox to 2pt{\rm\tiny +\hss}}}{dt}}%
{\FRAC{3}{d^{\hbox to 2pt{\rm\tiny +\hss}}}{dt}}}}
\def\average#1,#2,{{1\over #2} \sum_{#1}^{#2}}
\def\eye(#1){{\bf(#1)}\quad}
\newtheorem{theorem}{{\bf Theorem}}
\def\eq#1/{(\ref{e:#1})}
\newcommand{\inp}[2]{{\langle #1, #2 \rangle}}
\newcommand{\beqn}[1]{\notes{#1}%
\begin{eqnarray} \elabel{#1}}
\newcommand{\eeqn}{\end{eqnarray} }
\newcommand{\beq}[1]{\notes{#1}%
\begin{equation}\elabel{#1}}
\newcommand{\eeq}{\end{equation}}
\def\bdes{\begin{description}}
\def\edes{\end{description}}
\newcounter{rmnum}
\newcounter{anum}
\def\ass(#1:#2){(#1\ref{#1:#2})}
\def\ritem#1{
\item[{\sf \ass(\current_model:#1)}]
}
\newenvironment{recall-ass}[1]{%
\begin{description}
\def\current_model{#1}}{
\end{description}
}
\pgfplotsset{compat=newest}
\def\herm{{\sfH}}
\long\def\comment#1{}
\newcommand{\Sigmam}{\hbox{\boldmath$\Sigma$}}
\renewcommand{\arg}{{\hbox{arg}}}
\newcommand{\transp}{{\sf T}}
\author{ Mahdi Barzegar Khalilsarai$^*$, Saeid Haghighatshoar$^*$, Giuseppe Caire$^*$, Gerhard Wunder$^\dagger$\\
	\vspace{2mm}
	
	$^*\{$m.barzegarkhalilsarai, saeid.haghighatshoar, caire$\}$@tu-berlin.de, $^\dagger$wunder@zedat.fu-berlin.de    
	\vspace{-3mm}
}
\begin{document}

\title{Compressive Estimation of a Stochastic Process with Unknown Autocorrelation Function}

\maketitle

\begin{abstract}
	In this paper, we study the prediction of a circularly symmetric zero-mean stationary Gaussian process from a window of observations consisting of finitely many samples. This is a prevalent problem in a wide range of applications in communication theory and signal processing. Due to stationarity, when the autocorrelation function or equivalently the power spectral density (PSD) of the process is available, the \textit{Minimum Mean Squared Error} (MMSE) predictor is readily obtained. In particular, it is given by a linear operator that depends on autocorrelation of the process as well as the noise power in the observed samples. The prediction becomes, however,  quite challenging when the PSD of the process is unknown. In this paper, we propose a \textit{blind} predictor that does not require the a priori knowledge of the PSD of the process and compare its performance with that of an MMSE predictor that has a full knowledge of the PSD. To design such a blind predictor, we use the random spectral representation of a stationary Gaussian process. We apply the well-known atomic-norm minimization technique to the observed samples to obtain a discrete quantization of the underlying random spectrum, which we use to predict the process. Our simulation results show that this estimator has a good performance comparable with that of the MMSE estimator.
\end{abstract}

\begin{keywords}
	Prediction, Power Spectral Density, Random Spectral Representation, Atomic-norm Minimization, MMSE Estimator.
\end{keywords}

\section{Introduction}
Let $\bbh:=\{h_n: n\in \bZ_+\}$ be a zero-mean circularly symmetric stationary Gaussian process. Due to stationarity, the probability law of the process is fully characterized by its autocorrelation function $r_h(m)=\bE[h_nh^*_{n-m}]$. Let $N\in \bZ_+$ and let 
\begin{align}
y_n = h_n + z_n,\ \ n\in [N]
\end{align}
be a window of observations of the process consisting of $N$ samples, where $[N]=\{0,1,\dots, N-1\}$ and where $\{z_n: n \in [N]\}$ denotes the observation noise consisting of $N$ i.i.d. zero-mean circularly symmetric Gaussian variables with a variance $\sigma^2$. 
In this paper, we are interested in predicting the process $\bbh$ over the window $\Omega=\{N,N+1, \dots, 2N-1\}$ consisting of $N$ future samples of $\bbh$ using the observed samples $y_n, n\in [N]$. Denoting $\bfy=(y_0, \dots, y_{N-1})^\transp$, $\bfh=(h_0,\dots, h_{N-1})^\transp$, $\bfz=(z_0, \dots, z_{N-1})^\transp$, and $\bfg=(h_N, \dots, h_{2N-1})$, the optimal \textit{Minimum Mean Squared Error} (MMSE) predictor of $\bfg$ from the observations $\bfy$ is given by the linear operator
\begin{equation}\label{MMSE_estimate}
\widehat{\bfg}=\Sigmam_{\bfg\bfy} \Sigmam_{\bfy\bfy}^{-1} \bfy=\Sigmam_{\bfg\bfh} \Sigmam_{\bfy\bfy}^{-1} \bfy,
\end{equation}
where $\Sigmam_{\bfg\bfy}:=\bE[\bfg \bfy^\herm]=\Sigmam_{\bfg\bfh}$ and $\Sigmam_{\bfy\bfy}=\bE[\bfy\bfy^\herm]=\Sigmam_{\bfh\bfh} + \sigma^2 \bfI_N$ denote the cross-correlation matrix of $\bfg$ and $\bfy$ and the autocorrelation matrix of $\bfy$ respectively, and where we used the independence of $\bfh$ and the observation noise $\bfz$. Notice that the linear MMSE predictor only depends on $\Sigmam_{\bfg \bfh}$ and $\Sigmam_{\bfh\bfh}$ whose components can be obtained from the autocorrelation function $r_h(m)$ of the process $\bbh$. Thus, when the autocorrelation function  or equivalently the \textit{Power Spectral Density} (PSD) of the process is known a priori, the predictor in \eqref{MMSE_estimate} can be directly computed.  

The prediction becomes quite challenging, however, if the autocorrelation function is unknown, which is the case in many applications in communication theory and signal processing. For instance, this problem arises in a wireless communication scenario where one needs to obtain the channel state information, modeled by a Gaussian process, to schedule a set of users, which in turn requires predicting a fading channel in time. The multipath fading is commonly modeled as a superposition of sinusoids with random frequencies, amplitudes and phases \cite{gallager2008principles}. In \cite{shirani2010mimo}, the authors exploit such a discrete model for the fading channel to estimate the corresponding parameters via ESPRIT, as exploited for \textit{Direction of Arrival} (DOA) estimation algorithm in \cite{roy1989esprit}. The resulting estimate is then used to extrapolate the process across time.  Such a prediction with unknown autocorrelation also arises in the problem of estimating the \textit{downlink} (DL) channel in a \textit{Frequency Division Duplex} (FDD) system from the \textit{uplink} (UL) observations. In such a system, the UL and DL transmissions are performed in different frequency bands, and the \textit{Base Station} (BS) needs to send pilot signals to the users in the DL and receive their feedback via UL to acquire the channel state information. To reduce the resulting feedback overhead, in \cite{vasisht2016eliminating} the authors proposed an estimator for the delay profile of the channel using the observations over UL, which they exploited to extrapolate the channel state for DL. As in \cite{shirani2010mimo}, they assume a discrete model (for delay profile), thus, overlooking the possibility of continuous components, which occur in practical scenarios.

In this paper, we generalize the idea proposed by \cite{shirani2010mimo,vasisht2016eliminating}  and design a blind predictor that does not require the knowledge of autocorrelation or PSD of the process. To do so, we use the random spectral representation of a stationary Gaussian process whose underlying structure depends on the PSD. Using this representation, we obtain a decomposition of the process $\bbh$ into a \textit{discrete} and a \textit{continuous} part denoted by $\bbh^d$ and $\bbh^c$, which resembles the well-known Wold's decomposition for stationary processes \cite{wold1939study}. We also show that in contrast to $\bbh^d$, which can be well predicted from its past samples, prediction of $\bbh^c$ typically results in a large error. Motivated by the theory developed in \cite{chandrasekaran2012convex, bhaskar2013atomic, tang2013compressed,candes2014towards}, we use an atomic-norm minimization approach to quantize the random spectrum of process $\bbh$, and use the resulting quantization to perform a parametric extrapolation of the process. Compared to the previous works, where atomic-norm minimization is exploited for denoising a sparse signal, in our case the underlying spectrum is not sparse per se and the atomic-norm minimization is used merely for quantization rather than denoising. 

%

The rest of the paper is organized as follows. In sections \ref{sec:preliminaries} and \ref{sec:problem_statement}, we introduce the necessary  tools from probability and stochastic processes and state the extrapolation problem to be solved. Section \ref{sec:proposed_alg} describes our spectrum quantization algorithm based on the current results on atomic-norm minimization. Finally, we provide simulation results to evaluate the performance of our proposed algorithm in section \ref{sec:simulation_results}.

\section{Preliminaries}\label{sec:preliminaries}
Earlier we introduced $\bbh$ as a stationary discrete-time circularly symmetric Gaussian process with a zero mean and an autocorrelation function $r_{h}(m)=\bE[h_nh^*_{n-m}]$. Without any loss of generality, we assume that $\bbh$ has a normalized power, i.e., $r_{h}(0) = \bE[|h_n|^2]=1$. From the spectral theory of stationary processes \cite{grimmett2001probability} and, in particular, the Wiener-Khinchin theorem, we have that 
\begin{equation}\label{WK1}
r_{h}(m) = \int_{-1/2}^{1/2} \mathrm{e}^{\mathrm{j}2\pi m \xi} F(\mathrm{d}\xi)
\end{equation} 
for some right-continuous and non-decreasing function $F: I \to [0,1]$, with $I:=[-\frac{1}{2}, \frac{1}{2}]$, $F(-\frac{1}{2})=0$ and $F(\frac{1}{2})=1$, which is known as the power spectral distribution of $\bbh$ and assigns the positive measure $\mu_F(a,b]=F(b)-F(a)$ to any half-open interval $(a,b]$ for $a<b$. When $F$ is dominated by the Lebesque measure (length) over $I$, i.e, $\mu_F(A)=0$ for any measurable subset $A$ of $I$ of  Lebesgue measure zero, then the derivative $\frac{\mathrm{d}F(\xi)}{\mathrm{d}\xi}$ exists almost everywhere \cite{rudin1987real}, and it is easier to work with the power spectral density (PSD) (also known as the spectral density function) defined as
$S_h(\xi)=\frac{\mathrm{d}F(\xi)}{\mathrm{d}\xi}$, for which the Wiener-Khinchin theorem takes the more familiar form given by
\begin{equation}\label{WK2}
r_{h}(m) = \int_{-1/2}^{1/2} \mathrm{e}^{\mathrm{j}2\pi m \xi} S_h(\xi) \mathrm{d}\xi.
\end{equation}
The spectral representations in \eqref{WK1} and \eqref{WK2} relate two deterministic functions, i.e., the autocorrelation function and the power spectral density function associated to the Gaussian process $\bbh$. In this paper, we are additionally interested in a random spectral representation of $\bbh$. Such a representation is itself a circularly symmetric Gaussian process $\bH:=\{H(\xi): \xi \in I \}$ parametrized with $I$. It has independent increments, i.e., for any $v,u,v',u'\in I$ with $v < u < v' < u'$,
\begin{equation}
\bE[ (H(u) - H(v)) (H(u') - H(v'))^* ] = 0.
\end{equation}
Moreover, for such a $v,u$, it satisfies
\begin{align}\label{eq:F_H_relation}
\bE[|H(u) - H(v)|^2]=\mu_F(v,u]=F(u)-F(v), 
\end{align}
which implies that the variance of the increment $H(u) - H(v)$ in $\bH$ is given by the power of the process $\bbh$ in the spectral interval $(v,u]$. Also, $\bbh$ and $\bH$ are related via the stochastic integral 
\begin{equation}\label{stochastic_integral}
h_n = \int_{-1/2}^{1/2} \mathrm{e}^{\mathrm{j}2\pi \xi n}H(\mathrm{d}\xi) 
\end{equation}
which resembles the Fourier transform relation between the deterministic functions $r_h$ and $F$ as in \eqref{WK1} (please refer to \cite{grimmett2001probability} for a more rigorous definition of \eqref{stochastic_integral}).
%
%

The random spectral  representation in \eqref{stochastic_integral} has an important implication, that is, the process $\bbh$ can be represented as the superposition of discrete-time complex exponentials of the type $\psi_n(\xi)=\mathrm{e}^{\mathrm{j}2\pi \xi n} $ for frequencies $\xi \in I$ with circularly symmetric Gaussian coefficients given by the increments $H(\mathrm{d}\xi)$ of the spectral process around every frequency $\xi$. In particular, according to (\ref{eq:F_H_relation}), if $F(\xi)$ is continuous at $\xi$ then the coefficient $H(\mathrm{d}\xi)$ is a random variable with infinitesimal variance. In contrast, if $F(\xi)$ has a jump at $\xi$ ($F(\xi^+)\neq F(\xi^-)$), then the corresponding complex exponential $\psi_n(\xi)=\mathrm{e}^{\mathrm{j}2\pi \xi n}$ has a coefficient of non-infinitesimal variance given by the height of the jump of the spectral distribution at $\xi$, i.e. $F(\xi^+)- F(\xi^-)$. A convenient way to represent the jumps in  $F(\xi)$ is via using Dirac deltas in the PSD $S_h(\xi)$, where the variance of $H(\mathrm{d}\xi)$ at these jump points is given by the coefficient of the delta. Notice that, from Lebesgue decomposition, any \textit{Cumulative Distribution Function} (CDF) over the real-line and in particular the interval $I$ can be written as a convex combination of a discrete part, a continuous part, and a singular (fractal) part as $F(\xi) = \alpha_d F_d(\xi)  +\alpha_c F_c(\xi)+ \alpha_s F_s(\xi)$ where $\alpha_d,\alpha_c,\alpha_s\geq 0$ with $\alpha_d+\alpha_c+\alpha_s=1$. For simplicity, we will focus on the case where $\alpha_s=0$ and $F(\xi) = \alpha F_d(\xi)  +(1-\alpha) F_c(\xi)$ for some $\alpha \in [0,1]$. Note that $F_c(\xi)$ is a continuous CDF without any jumps, and $F_d(\xi)$ is a discrete CDF, i.e., it is right-continuous and piecewise constant with jumps in $I$.

Using the conventions above, we can decompose the stationary process $\bbh$ into a \textit{discrete} part and a \textit{continuous} part. The discrete part is equal to the sum of the complex exponentials $\psi_n(\xi)$ at points $\{\xi_i\}_{i=1}^k$ corresponding to the jump points of $F_d(\xi)$ given by
\begin{equation}
h_n^{d} = \int_{-1/2}^{1/2} \mathrm{e}^{\mathrm{j}2\pi \xi n} H^{d}( \mathrm{d} \xi)=\sum_{i=1}^k X_i \mathrm{e}^{\mathrm{j}2\pi \xi_i n}, 
\end{equation} 
where the variables $\{X_i\}_{i=1}^k$ are independent zero-mean circularly symmetric Gaussian variables of variance $\alpha (F_d(\xi_i^+) - F_d(\xi_i^-) )$, and where $H^{d}(\xi)=\sum_{i: \xi_i\leq \xi} X_i$, for $\xi \in I$, is a spectral Gaussian process with jumps of size $X_i$ at location $\xi_i$ generated by $\alpha F_d(\xi)$ according to \eqref{eq:F_H_relation}.
Similarly, the continuous part is given by 
\begin{equation}
h_n^{c}  = \int_{-1/2}^{1/2} \mathrm{e}^{\mathrm{j}2\pi \xi n} H^{c}( \mathrm{d} \xi), 
\end{equation}     
where $H^{c}(\xi)$ corresponds to a random spectral process obtained from $\bH$ after removing the discrete part associated with $\{X_i, \xi_i\}_{i=1}^k$ and is generated by the continuous part $(1-\alpha) F_c(\xi)$ according to \eqref{eq:F_H_relation}. We will use these decompositions in the next section to formulate the prediction problem of the process $\bbh$ more clearly. 

\section{Statement of the Problem}\label{sec:problem_statement}
We begin this section with an example. Let $\bbh$ be a stationary process with a purely discrete spectral distribution function $F(\xi)$, that is, $F(\xi)=F_d(\xi)$. From the observations we made in section \ref{sec:preliminaries},  we can write $h_n=\sum_{i=1}^k X_i \mathrm{e}^{\mathrm{j} 2\pi n \xi_i}$, where $\{X_i\}_{i=1}^k$ are independent zero-mean circularly symmetric Gaussian random variables with variances $\{\sigma_i^2\}_{i=1}^k$, where $\{\xi_i\}_{i=1}^k$ denote the jump locations in $F_d(\xi)$, and where $\sigma_i^2=F_d(\xi_i^+)-F_d(\xi_i^-)$. Let $N\in \bZ_+$ be the length of the observation window and let us define the vector-valued function $\bfa: I\to \bC^N$ by $\bfa(\xi)=(1, \mathrm{e}^{\mathrm{j} 2\pi \xi}, \dots, \mathrm{e}^{\mathrm{j} 2\pi (N-1)\xi})^\transp$. 
Also, let $\bfh=(h_0, \dots, h_{N-1})^\transp$ be the $N$-dim vector consisting of the first $N$ components of $\bbh$ as before. We can write $\bfh=\sum_{i=1}^k X_i \bfa(\xi_i)$, where it is seen that, for the discrete spectrum $F_d(\xi)$, the observation vector $\bfh$ resembles the signal received in a \textit{uniform linear array} with $N$ elements from $k$ Gaussian sources with amplitudes  and DoAs $\{X_i, \xi_i\}_{i=1}^k$. Thus, provided that $k\ll N$ and the sources are sufficiently separable in $\xi$, the random spectral process can be well estimated from $\bfh$ even in presence of noise. The resulting estimate $\widehat{H}(\xi)$ can be used to obtain an estimate of any other sample of the process $\bbh$ as $\widehat{h}_n=\int \mathrm{e}^{\mathrm{j} 2 \pi \xi n} \widehat{H}(d\xi)$, especially those samples $\bfg=(h_N, \dots, h_{2N-1})^\transp$ belonging to our desired prediction window $\Omega=\{N,N+1, \dots, 2N-1\}$. Thus, it seems that under some regularity conditions on $F_d(\xi)$, the prediction over $\Omega$ is feasible to do. 

Now assume that $F(\xi)=F_c(\xi)$ consists of only a continuous part. For simplicity of illustration, let us take $F_c(\xi)=\xi+\frac{1}{2}$. We can check that $F_c$ induces a uniform measure over $I$. In particular, for such a uniform measure 
\begin{align}
r_h(m)=\left \{ \begin{array}{ll} 1 & m=0,\\ 0 & \text{otherwise,} \end{array} \right.
\end{align}
which implies that $\bbh$ is a white noise consisting of i.i.d. Gaussian samples. As a result, the samples $\bfg$ inside the window $\Omega$ can not be estimated from the observed samples. This turns out to be true for continuous distributions $F_c(\xi)$ other than the uniform one. 

In practice, $F(\xi)$ consists of both continuous and discrete parts. However, in terms of prediction over $\Omega$, we can only hope to predict the discrete part $\bfg^d$ of $\bfg=\bfg^d + \bfg^c$ over the window $\Omega$ from the discrete part $\bfh^d$ of the observation $\bfh=\bfh^d + \bfh^c$. In fact, we can apply the estimation technique we mentioned for the discrete part $\bfh^d$ to estimate $\widehat{H}_d(\xi)$, which we can exploit to predict $\bfg^d$. This results in a prediction error comparable to that of the MMSE predictor, especially, for a long-term prediction.
However, an obstacle to do this is that we have access only to a mixture of the discrete part $\bfh^d$ and the continuous part $\bfh^c$ through $\bfh=\bfh^d+ \bfh^c$ rather than $\bfh^d$ itself. In particular we have the following result.

\begin{theorem}\label{thm_1}
Let $\bfh=(h_0,\dots, h_{N-1})^\transp$ be a vector consisting of $N$ samples of a zero-mean circularly symmetric stationary Gaussian process $\bbh$ with the power spectral distribution $F(\xi)=P_c F_c(\xi)+(1-P_c) F_d(\xi)$ with $0\le P_c\le 1$. Let  $\text{MMSE}(T)$ denote the minimum mean squared error for estimating sample $h_{N+T}$ of process $\bbh$ given the observation vector $\bfh$. Then
$\liminf_{T\rightarrow\infty} \text{MMSE}(T) \ge P_c$. \hfill $\square$
\end{theorem}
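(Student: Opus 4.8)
The guiding intuition is that from a finite window one can only hope to cancel the \emph{discrete}-spectrum component of $h_{N+T}$; the \emph{continuous}-spectrum component carries power exactly $P_c$ and becomes asymptotically uncorrelated with the observation window as $T\to\infty$, so it leaves an irreducible residual error $P_c$. The plan is: (i) reduce the problem to estimating the continuous part alone; (ii) write the linear-MMSE formula for that sub-problem; (iii) kill the relevant cross-covariance by Riemann--Lebesgue. Since $h_{N+T}$ and $\bfh$ are jointly circularly symmetric Gaussian, the MMSE estimator is linear, $\widehat h_{N+T}=\bfc^\herm\bfh$ for the optimal $\bfc$. Using the decomposition $h_n=h_n^d+h_n^c$ of Section~\ref{sec:preliminaries}, the processes $\bbh^d$ and $\bbh^c$ are independent and zero-mean (they are stochastic integrals against the increments of $\bH$ over the disjoint atomic and continuous parts of the spectrum, and increments of $\bH$ over disjoint sets are uncorrelated, hence jointly Gaussian independent). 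Therefore
\begin{equation*}
\text{MMSE}(T)=\bE\bigl|h_{N+T}^d-\bfc^\herm\bfh^d\bigr|^2+\bE\bigl|h_{N+T}^c-\bfc^\herm\bfh^c\bigr|^2\ \ge\ \min_{\bfc'}\bE\bigl|h_{N+T}^c-(\bfc')^\herm\bfh^c\bigr|^2=:\text{MMSE}_c(T).
\end{equation*}
If $P_c=0$ the claim is vacuous, so assume $P_c>0$.

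Next I would evaluate $\text{MMSE}_c(T)$. Let $\Sigmam_c:=\bE[\bfh^c(\bfh^c)^\herm]$ be the $N\times N$ Hermitian Toeplitz matrix with entries $r_{h^c}(i-j)$, where $r_{h^c}(m)=P_c\int_I \mathrm{e}^{\mathrm{j}2\pi m\xi}\,\mathrm{d}F_c(\xi)$, and let $\bfr_T:=\bE[\bfh^c\,(h_{N+T}^c)^*]$, whose $n$-th entry has modulus $|r_{h^c}(N+T-n)|$. Because $\bbh^c$ has absolutely continuous spectral measure $P_c F_c$ (the singular part being assumed null) and $P_c>0$, the matrix $\Sigmam_c$ is positive definite and, crucially, independent of $T$: indeed $\bfv^\herm\Sigmam_c\bfv=P_c\int_I\bigl|\sum_{n=0}^{N-1}v_n \mathrm{e}^{\mathrm{j}2\pi n\xi}\bigr|^2\,\mathrm{d}F_c(\xi)$, and a nonzero trigonometric polynomial of degree $<N$ vanishes at only finitely many points of $I$, whereas $F_c$ (absolutely continuous of total mass one) does not concentrate on a finite set, forcing $\bfv=0$. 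The standard linear-MMSE identity (the scalar analogue of \eqref{MMSE_estimate}) then gives $\text{MMSE}_c(T)=r_{h^c}(0)-\bfr_T^\herm\Sigmam_c^{-1}\bfr_T=P_c-\bfr_T^\herm\Sigmam_c^{-1}\bfr_T$.

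Now I would finish with a decorrelation argument. Since $F_c$ is absolutely continuous, its density $S_c=\mathrm{d}F_c/\mathrm{d}\xi$ lies in $L^1(I)$, so by the Riemann--Lebesgue lemma $r_{h^c}(m)=P_c\int_I \mathrm{e}^{\mathrm{j}2\pi m\xi}S_c(\xi)\,\mathrm{d}\xi\to 0$ as $|m|\to\infty$; hence $\|\bfr_T\|\to 0$ as $T\to\infty$. As $\Sigmam_c^{-1}$ is a fixed bounded matrix, $0\le\bfr_T^\herm\Sigmam_c^{-1}\bfr_T\le\|\Sigmam_c^{-1}\|\,\|\bfr_T\|^2\to 0$, so $\text{MMSE}_c(T)\to P_c$, and combining with the first paragraph, $\liminf_{T\to\infty}\text{MMSE}(T)\ge\lim_{T\to\infty}\text{MMSE}_c(T)=P_c$.

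The only genuinely delicate points are the independence split in step one --- which rests precisely on the orthogonal-increments structure of the random spectral representation (this is why the theorem is naturally phrased in that language) --- and the non-degeneracy of $\Sigmam_c$; both are handled above, and the non-degeneracy can in fact be sidestepped entirely by replacing $\bfr_T^\herm\Sigmam_c^{-1}\bfr_T$ with $\|P_{\mathcal L}h^c_{N+T}\|^2$, the squared norm of the projection onto the finite-dimensional span $\mathcal L=\mathrm{span}\{h^c_0,\dots,h^c_{N-1}\}$, which tends to $0$ for the same reason. If one wished to drop the ``no singular continuous part'' hypothesis, Riemann--Lebesgue would be replaced by Wiener's lemma, $\frac1M\sum_{m=1}^M|r_{h^c}(m)|^2\to\sum_{\text{atoms }a}|\mu_{F_c}(\{a\})|^2=0$, giving $r_{h^c}(m)\to0$ along a density-one set of indices, which still yields the $\liminf$ bound.
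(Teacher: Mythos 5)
Your proof is correct and follows essentially the same route as the paper's: decompose into independent discrete and continuous components, reduce to the continuous sub-problem (your single-coefficient split is equivalent to the paper's genie-aided/data-processing bound), apply the linear-MMSE identity, and kill the cross-covariance with Riemann--Lebesgue. Your treatment is slightly more careful than the paper's on the positive-definiteness of $\Sigmam_{\bfh^c\bfh^c}$ (which the paper implicitly assumes when inverting it), but this is a refinement of the same argument rather than a different one.
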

\begin{proof}
The proof is provided in Appendix I.
\end{proof}

In the next section, we propose an algorithm that applies the atomic-norm minimization to the full observation $\bfh$ to obtain a quantized estimate $\widehat{H}(\xi)$ of the random spectral process $H(\xi)=H^d(\xi)+H^c(\xi)$. The main idea is that under suitable regularity conditions on the discrete part $F_d(\xi)$, we can decompose the resulting estimate as $\widehat{H}(\xi)= \widehat{H}^d(\xi) + \widehat{H}^c(\xi)$, where $\widehat{H}^d(\xi)$ provides a quite precise estimate of the true process $H^d(\xi)$, which we can exploit to predict the discrete part $\bfg^d$. In contrast, $\widehat{H}^c(\xi)$ is only a discrete approximation of $H^c(\xi)$. In particular, although $\widehat{H}^c(\xi)$ provides a good approximation of $\bfh^c$ over the observation window, i.e., 
\begin{align}
h^c_n = \int \mathrm{e}^{\mathrm{j} 2\pi \xi n} H^c(d\xi)\approx \int \mathrm{e}^{\mathrm{j} 2\pi \xi n} \widehat{H}^c(d\xi), n \in [N],
\end{align}
from our earlier explanation, we expect that the long-term prediction of $\bfg^c$ obtained via $\widehat{H}^c(\xi)$ be uncorrelated with $\bfg^c$. Overall, using the proposed algorithm we hope to predict $\bfg$ from observations $\bfh$ with an error that grows by twice the variance of the continuous part $\bfg^c$, thus, twice the best prediction error achieved by the MMSE predictor.

\section{Proposed algorithm}\label{sec:proposed_alg}
As motivated in section \ref{sec:problem_statement}, our goal is to quantize the spectrum of the process $\bbh$. To do so, we find among all the complex measures that fit our observation vector $\bfh$, the one with lowest total-variation norm. More precisely, we consider
\begin{equation}\label{eq:tv_min}
\begin{aligned}
\mu^\ast = &~ \underset{\mu}{\arg\min}
& & \Vert \mu\Vert_{\text{TV}}  ~~\text{subject to} ~~ \Vert \mathcal{F}_N\mu - \bfy\Vert_2 \le \epsilon,
\end{aligned}
\end{equation}
where $\mathcal{F}_N$ is a linear map returning the first $N$ frequency coefficients of the measure $\mu(\mathrm{d}\xi)$, i.e., $[\mathcal{F}_N\mu]_n = \int_{-1/2}^{1/2}\mathrm{e}^{\mathrm{j}2\pi n \xi}\mu(\mathrm{d}\xi),~n\in [N]$, and $\epsilon$ is an estimate of the $\ell_2$-norm of the additive Gaussian noise vector.
Optimization~(\ref{eq:tv_min}) returns a discrete measure of the form
\begin{equation}\label{eq:mu_tv}
\mu^\ast=\sum_{i=1}^l c_i ^\ast\delta_{\xi_i^\ast},
\end{equation}
where $\delta_{\xi_i}^\ast$ denotes a delta measure at $\xi_i^\ast \in I$. Intuitively speaking, in the noiseless case, minimizing the total-variation norm returns a consistent measure, i.e., we expect that $\xi_i^\ast$s in (\ref{eq:mu_tv}) belong to the support of the PSD.  
Considering the complex measure $\mu^\ast$ in (\ref{eq:mu_tv}), we have that $\mathcal{F}_N\mu^\ast=\sum_{i}c_i^\ast \bfa(\xi_i^\ast)$ and total-variation norm of this measure is simply given by $\sum_{i}|c_i^\ast|$. This implies that (\ref{eq:tv_min}) can be equivalently written as 
\begin{equation}\label{eq:x_atomic_min}
\begin{aligned}
\bfx^\ast = & ~ \underset{\bfx}{\arg\min}
& & \Vert \bfx\Vert_{\mathcal{D}} ~~\text{\text{s.t.}} ~~ \Vert \bfx - \bfy\Vert_2 \le \epsilon, \\
\end{aligned}
\end{equation}
where $\Vert \bfx\Vert_{\mathcal{D}}$ denotes the atomic norm of $\bfx$ over the dictionary $\mathcal{D}=\Big\{\bfa(\xi)\in\mathbb{C}^N, \xi\in I\Big\}$ defined by 
 \begin{equation}\label{eq:Atomic_Norm}
 \Vert \bfx\Vert_{\mathcal{D}}=\inf \Big\{\underset{i}{\sum}|c_i| : \exists\, \xi_i \text{ s.t. } \bfx=\underset{i}{\sum}c_i\bfa(\xi_i) \Big\}
 \end{equation}
 and where the optimal solution $\bfx^\ast$ of \eqref{eq:x_atomic_min} is given by $\bfx^\ast = \mathcal{F}_N\mu^\ast$ in terms of the optimal solution $\mu^\ast$ of \eqref{eq:tv_min}.
From the solution of (\ref{eq:x_atomic_min}) one can determine the frequencies $\xi_i^\ast$ and their corresponding coefficients $c_i^\ast$, which immediately specifies the measure $\mu^\ast$, showing that (\ref{eq:x_atomic_min}) and (\ref{eq:tv_min}) are equivalent.
\begin{figure}[t]
	\centering
	\includegraphics[ width=0.45\textwidth,height=0.3\textwidth]{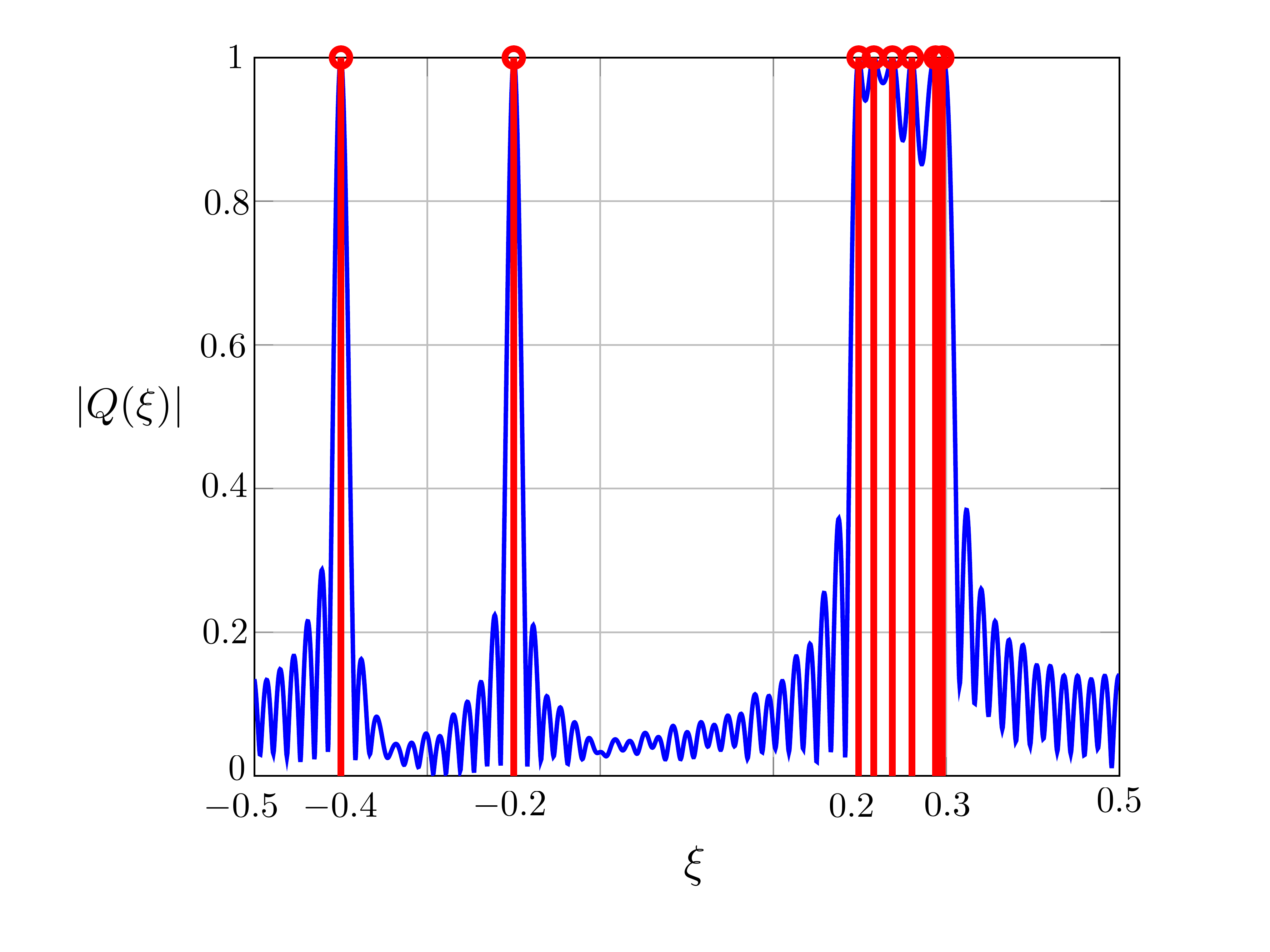}
	\caption{Dual polynomial $Q(\xi)$ as a function of $\xi$. The blue curve shows the dual polynomial $Q(\xi)$ and the red spikes show the quantization points $\{\xi_i\}_{i=1}^l$.}
	\label{fig:spectrum_quantize}
\end{figure} 

In particular, solving (\ref{eq:x_atomic_min}) serves two purposes: first, it reduces the effect of white additive noise and second, it forces the estimated vector to have a sparse representation in $\mathcal{D}$, which implies a sparse quantization of the PSD. Fortunately, compared to (\ref{eq:tv_min}) which requires an infinite-dimensional optimization over the space of complex measures, \eqref{eq:x_atomic_min} has an equivalent SDP form as 
\begin{equation}\label{eq:atomic_sdp}
\begin{aligned}
\{\bfx^\ast,\bfu^\ast,\lambda^\ast\} =&~ \underset{\bfx,\bfu,\lambda}{\arg\min}
 ~~ \frac{1}{2N}\text{trace}(\text{Toep}(\bfu))+\frac{1}{2}\lambda \\
 &~\text{s.t.}
~~~~~~
\begin{bmatrix}
\text{Toep}(\bfu) & \bfx\\
\bfx^\herm & \lambda
\end{bmatrix}\succeq 0,~~\Vert \bfx - \bfy\Vert_2 \le \epsilon,
\end{aligned}
\end{equation}
where $\text{Toep}(\bfu)$ denotes a Toeplitz hermitian matrix with $\bfu$ as its first column. Once this optimization problem is solved, one can specify the active atoms in $\mathcal{D}$, or equivalently the discrete measure $\mu^\ast$, by solving the dual problem to (\ref{eq:x_atomic_min}), which can also be represented as the following SDP \cite{tang2013compressed}:
\begin{equation}\label{dual_problem_1}
\begin{aligned}
\{\bfq^\ast,\bfH^\ast\}& =~ \underset{\bfq,\bfH}{\arg\max}
~~~~ \operatorname{Re}(\bfq^\herm \bfx^\ast)\\
& ~~~~~\text{s.t.} ~~ 
\begin{bmatrix}
\bfH & -\bfq\\
-\bfq^\herm & 1
\end{bmatrix}\succeq 0, \ \ \bfH=\bfH^\herm \\
&  ~~\sum_{k=1}^{N-j}H_{k,k+j}= 
\left\{
\begin{array}{ll}
1,j=0,\\
0, j=1,2,\ldots,N-1,
\end{array}
\right.\\
\end{aligned}
\end{equation}
Using the optimal solution $\bfq^\ast$ of the dual problem, we can construct the dual polynomial $Q(\xi)=\inp{\bfq^\ast}{\bfa(\xi)}=\sum_{n=0}^{N-1}q_n^\ast \mathrm{e}^{-\mathrm{j}2\pi n\xi}$ which satisfies the following \cite{tang2013compressed}
\begin{equation}\label{dual_poly_properties}
\begin{split}
& Q(\xi_i^\ast) = \text{sign}(c_i^\ast), ~~i=1,2,\ldots,l\\
& |	Q(\xi)|<1, ~~~~~~~~~~\xi\neq\xi_i^\ast ~~ \forall i
\end{split}
\end{equation}
where $\{c_i^\ast\}_{i=1}^l$ are the coefficients of the optimal complex measure $\mu^\ast$ in \eqref{eq:mu_tv} and where $\text{sign}(c_i^\ast)=\frac{c_i^\ast}{|c_i^\ast|}$. Therefore, the support $\{\xi_i^\ast\}_{i=1}^l$ of $\mu^\ast$ can be obtained from solving the equation $|Q(\xi)|=1$. If there is no noise and $\bfh^{c}=\bf0$, solving $|Q(\xi)|=1$ results in accurate spectrum localization provided that $k\le \frac{N}{2}$ and $\xi_i^\ast$s are separated enough such that 
\begin{equation}
\Delta_{\xi}=\underset{i\neq j}{\min}|\xi_i^\ast-\xi_j^\ast| \ge \frac{1}{\left\lfloor(N-1)/4\right\rfloor}.
\end{equation}
\begin{figure}[t!]
	\centering
	\includegraphics[trim={6cm 10cm 6cm 4cm},clip, width=0.45\textwidth,height=0.3\textwidth]{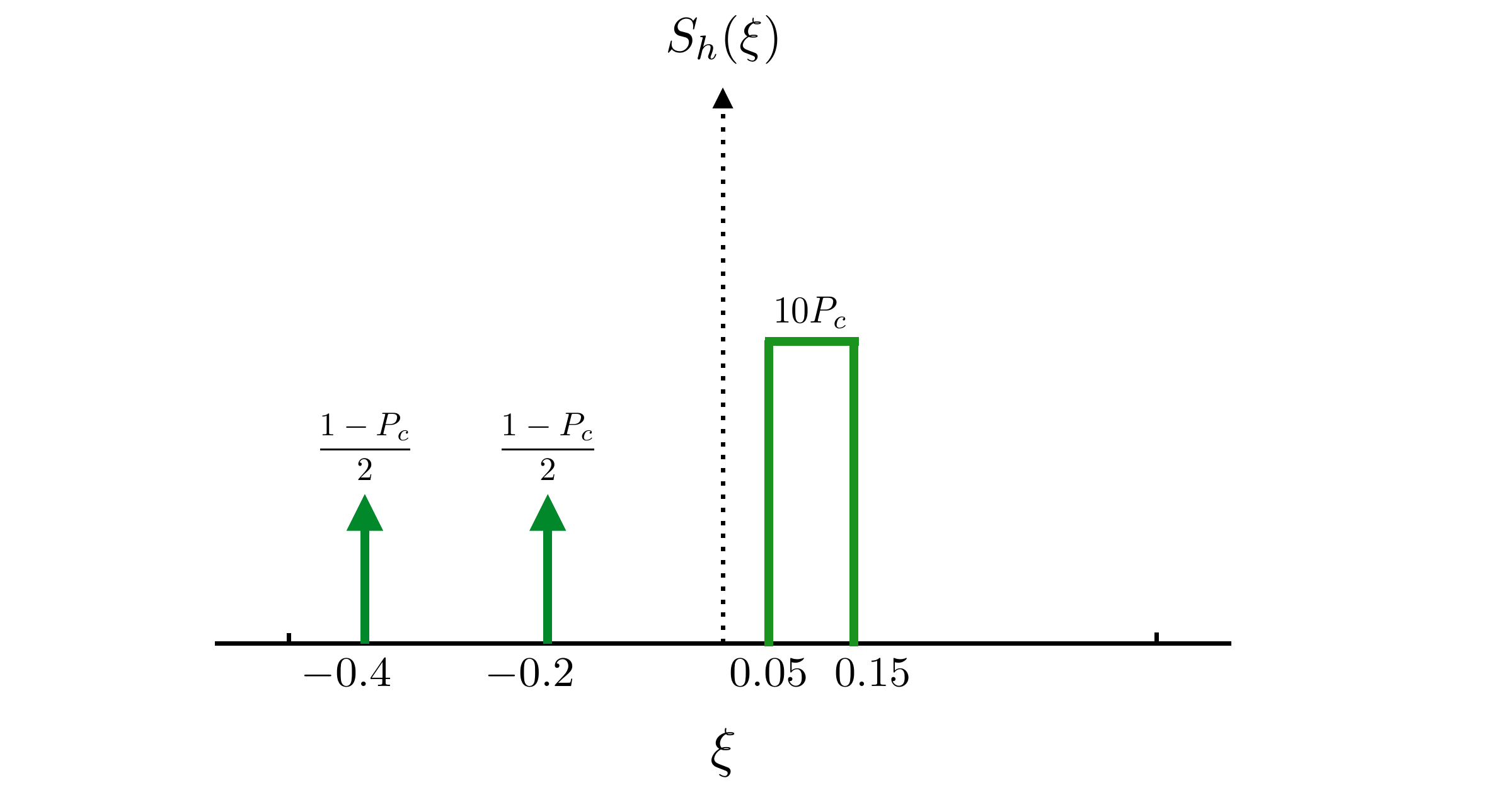}
	\caption{The PSD function used for simulation.}
	\label{fig:psd}
\end{figure}
However, this optimization applied to a mixture of discrete and continuous components, generates additional spectral elements to compensate for the the continuous component as well as noise. Fig. \ref{fig:spectrum_quantize} illustrates this fact more clearly. In this figure, we have plotted the corresponding dual polynomial in an experiment, where we observed a process consisting of a mixture of discrete and continuous components with $N=64$ and a PSD function which consists of a discrete part located at $\xi=-0.4$ and $\xi=-0.2$ and a continuous part uniformly distributed over $[0.2,0.3]$. Fig. \ref{fig:spectrum_quantize} illustrates how $\{\xi_i^\ast\}_{i=1}^l$ are identified via the dual polynomial, hence resulting in spectrum quantization. To obtain the coefficients corresponding to the set $\{\xi_i^\ast\}_{i=1}^{l}$, we form the matrix of active atoms as
\begin{equation}
\bfA=\Big[\bfa(\xi_1^\ast),\bfa(\xi_2^\ast),\ldots,\bfa(\xi_l^\ast)\Big].
\end{equation}
Knowing the active elements, the atomic-norm minimization in (\ref{eq:x_atomic_min})  can be written as an $\ell_1$-norm minimization
\begin{equation}\label{l_2_1_minimization}
\begin{aligned}
\bfc^\ast= &~ \underset{\bfc \in \mathbb{C}^{l}}{\text{arg min}}
&& \Vert \bfc\Vert_1  ~~~\text{s.t.}  ~~~\Vert \bfA\bfc - \bfy\Vert_2 \le \epsilon, \\
\end{aligned}
\end{equation}

where $\Vert \bfc\Vert_1 \coloneqq \sum_{i=1}^{l}|c_i|$. This optimization yields the corresponding coefficients $\bfc^\ast=(c_1^\ast,\ldots,c_l^\ast)^\transp$. Finally, we use the estimated parameters $\{c_i^\ast\}_{i=1}^l$ and $\{\xi_i^\ast\}_{i=1}^l$ to estimate $\bfg$ as
\begin{equation}
[\hat{\bfg}]_n=\sum_{i=1}^{l}c_i^\ast\mathrm{e}^{\mathrm{j}2\pi (N+n)\xi_i^\ast}.
\end{equation}
In the next section, we will show that the proposed method has a good performance in different scenarios.

\section{Simulation Results}\label{sec:simulation_results}
In this section, we provide simulation results to compare the performance of our proposed extrapolation method to that of the MMSE extrapolator. We focus on cases in which the process consists of a discrete and a continuous part. For simulations, we consider a noisy observation vector $\bfy$ of size $N=64$ with SNR = $20$ dB. Using this observation, we predict vector $\bfg$ consisting of the next $N=64$ samples of the process. Our performance metric in these simulations is the normalized mean squared error defined as $E = \frac{1}{N}\bE\{\Vert\hat{\bfg} - \bfg\Vert^2\}$. 
\subsection{Effect of the continuous component}
As before, we assume that $\int F(\mathrm{d}\xi)=1$ and denote the fraction of power in the continuous part by $P_c\in[0,1]$. The PSD function is chosen to have a discrete part located at $\xi=-0.4$ and $\xi=-0.2$ and a continuous part uniformly distributed over $[0.05,0.15]$, as plotted in Fig. \ref{fig:psd}. We increase $P_c$ from 0 to 0.5 and for each $P_c$ we generate $\bfy$, estimate $\bfg$ via both the MMSE estimator and our proposed estimator, and calculate the normalized estimation error $E$. We find an estimate of $E$ by averaging it over 1000 independent realizations of the process.  Fig. \ref{fig:sweep_alpha} illustrates $E$ as a function of $P_c$. It is seen that, the MMSE estimator has an error which is converging to $P_c$. The error of our proposed estimator is approximately twice the MMSE and, in fact, is quite low when $P_c$ is small. 

\begin{figure}[t]
	\centering
	\includegraphics[width=0.45\textwidth,height=0.3\textwidth]{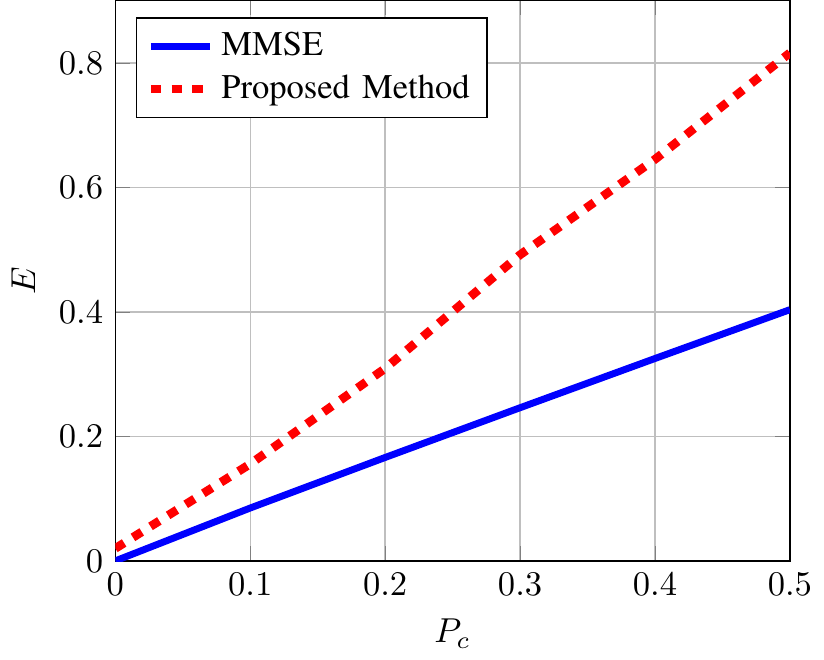}
	\caption{Normalized prediction error for MMSE predictor and our proposed estimator v.s. power of the continuous PSD $P_c$. }
	\label{fig:sweep_alpha}
\end{figure}

\subsection{Effect of increasing the number of jumps in $F_d(\xi)$}
To see how the number of jumps in the spectral distribution affects the performance of our estimator, we consider a PSD with a continuous part as before, i.e., uniformly distributed over $[0.05,0.15]$ and with a fixed power $P_c=0.3$. We add to this PSD $k$ Dirac deltas with equal  amplitudes $\frac{1-P_c}{k}=\frac{0.7}{k}$ and random frequencies $\{\xi_i\}_{i=1}^k\subset [0,1]$ with minimum separation larger than $\frac{1}{N}$. We perform the experiment for $k=1,\ldots,5$, and for each $k$ we repeat it for 1000 trials. Fig. \ref{fig:E_est} illustrates the result of this experiment. As the number of discrete elements grows, the error of the proposed method increases with a slight slope. This is because the power is distributed over a greater number of spikes and the noise effect appears relatively stronger. However, the error is still less than twice the MMSE. 

\section{Conclusion}
Using the framework of total-variation norm and atomic-norm minimization, we proposed a method to quantize the random spectrum of a stationary stochastic process. This quantization is then exploited to predict the process. We investigated the empirical performance of our proposed algorithm via numerical simulations. We illustrated that the prediction error is relatively low, and is roughly proportional to the MMSE up to a factor of 2 when there exists a continuous component in the PSD with non-negligible power.

\begin{figure}[t]
	\centering
	\includegraphics[trim=0cm 0cm 0 0.26cm, width=0.45\textwidth,height=0.3\textwidth]{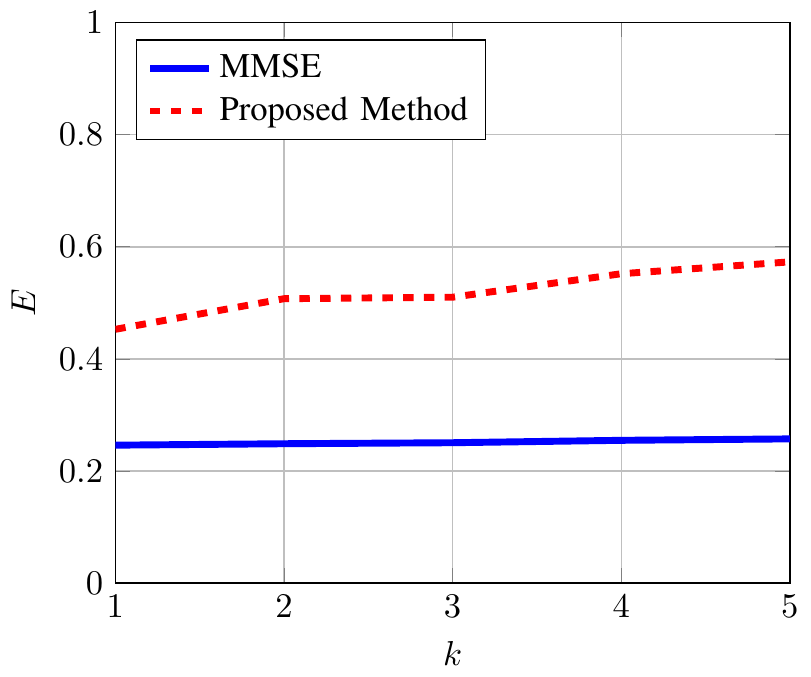}
	\caption{Normalized prediction error for MMSE predictor and our proposed estimator v.s. number of discrete PSD components $k$.  }
	\label{fig:E_est}
\end{figure}

\section{Appendices}
\subsection{Proof of Theorem \ref{thm_1}}\label{sec:appendix_1}
 The process $\bfh$ can be decomposed into independent discrete and continuous components, $\bfh^d$ and $\bfh^c$ with corresponding power spectral densities $(1-P_c)F_d(\xi)$ and $P_cF_c(\xi)$, respectively. Suppose there exists a genie-aided estimator which has access to the observation vectors $\bfh_c$ and $\bfh_d$ separately, as opposed to the conventional MMSE considered so far, which has access only to their sum, that is $\bfh = \bfh_c + \bfh_d$. We denote by $\text{MMSE}(T)$ the minimum mean squared estimation error for sample $h_{N+T}$.  We mention the mean squared error of the genie-aided estimator by $\text{MMSE}_{\text{genie}}(T)$. According to data processing inequality
\begin{equation}\label{mmse_ineq}
\text{MMSE}_{\text{genie}}(T) \le \text{MMSE}(T) ~~
\end{equation}
for all $T\ge 0$. Furthermore, the genie-aided estimator can be written as
\begin{equation}
\begin{aligned}
\hat{h}_{N+T}  & = \hat{h}_{N+T}^c + \hat{h}_{N+T}^d\\
\end{aligned}
\end{equation}
and since the continuous and discrete parts are independent, MMSE of the continuous part is only dependent on the continuous part and similarly MMSE of the discrete part is only dependent on the discrete part. As a result, the genie-aided estimator has the error 
\begin{equation}
\begin{aligned}
\text{MMSE}_{\text{genie}}(T) & = \bE[|h_{N+T}-\hat{h}_{N+T}|^2] \\
& =\bE[|h_{N+T}^c-\hat{h}_{N+T}^c|^2] + \bE[|h_{N+T}^d-\hat{h}_{N+T}^d|^2]
\end{aligned}
\end{equation}
In addition, it can be easily shown that
\begin{equation}\label{eq:est_err}
\begin{aligned}
\bE[|h_{N+T}^c-\hat{h}_{N+T}^c|^2] & = \bE[|h_{N+T}^c|^2] - \bE[|\hat{h}_{N+T}^c|^2] \\
& = P_c - \bE[|\hat{h}_{N+T}^c|^2]
\end{aligned}
\end{equation}
where
\begin{equation}
\bE[|\hat{h}_{N+T}^c|^2] = \Sigmam_{h_{N+T}^c\bfh^c} \Sigmam_{\bfh^c\bfh^c}^{-1} \Sigmam_{h_{N+T}^c\bfh^c}^\herm
\end{equation}
Note that $\Sigmam_{h_{N+T}^c\bfh^c} $ is an $N\times 1$ vector whose $l\textsuperscript{th}$ component $ l=0,\ldots, N-1$ is given by
\begin{equation}
\begin{aligned}
\Sigmam_{h_{N+T}^c\bfh^c} (l)& = P_c\int \textrm{e}^{\textrm{j}2\pi \xi (N+T-l)} F_c(\textrm{d}\xi)\\
& =P_c\int \textrm{e}^{\textrm{j}2\pi \xi (N+T-l)} S_c(\xi)\textrm{d}\xi ~~~~ l=0,\ldots, N-1
\end{aligned}
\end{equation}
where $S_c=\frac{dF}{d\xi}$ denotes the Radon-Nikodym derivative of the absolutely continuous measure $F_c$ with respect to the Lebesgue measure \cite{rudin1987real}. Since $\int |S_c(\xi)| d\xi=\int S_c(\xi) d\xi =P_c <\infty$, from Rimann-Lebesgue lemma \cite{bochner2016fourier}, it results that 
\begin{equation}
\lim_{T\rightarrow\infty} P_c\int \textrm{e}^{\textrm{j}2\pi \xi (N+T-l)} S_c(\xi)\textrm{d}\xi = 0.
\end{equation}
This implies that $\lim_{T\rightarrow\infty} \Sigmam_{h_{N+T}^c\bfh^c}=\bf0$ and as a result $\lim_{T\rightarrow\infty} \bE[|\hat{h}_{N+T}^c|^2]=0$. From  \eqref{eq:est_err} it results that $\lim_{T\rightarrow\infty} \bE[|h_{N+T}^c-\hat{h}_{N+T}^c|^2] = P_c - \lim_{T\rightarrow\infty} \bE[|\hat{h}_{N+T}^c|^2] = P_c$, which implies
\begin{equation}
\lim\inf_{T\rightarrow\infty} \text{MMSE}_{\text{genie}}(T) \ge P_c
\end{equation}
Plugging this inequality in \eqref{mmse_ineq} we obtain
\begin{equation}
\lim\inf_{T\rightarrow\infty} \text{MMSE}(T) \ge P_c.
\end{equation}
This completes the proof. $\blacksquare$

%
{\small
\bibliographystyle{IEEEtran}
\bibliography{references}
}

\end{document}